\newcommand{\jpa}{J. Phys. A~}
\newcommand{\njp}{New. J. Phys.~}
\newcommand{\prl}{Phys. Rev. Lett.~}
\newcommand{\pra}{Phys. Rev. A~}
\newcommand{\rmp}{Rev. Math. Phys.~}
\definecolor{myurlcolor}{rgb}{0,0,0.7}
\newcommand{\blue}{\textcolor{blue}}
\newcommand{\proj}[1]{| #1\rangle\!\langle #1 |}
\newcommand{\tinyspace}{\mspace{1mu}}
\newcommand{\abs}[1]{\left\lvert\tinyspace #1 \tinyspace\right\rvert}
\renewcommand{\t}{{\scriptscriptstyle\mathsf{T}}}
\newcommand{\setft}[1]{\mathrm{#1}}
\newcommand{\density}[1]{\setft{D}\left(#1\right)}
\def \diag {\mathrm{diag}}
\def \re {\mathrm{Re}}
\def\complex{\mathbb{C}}
\def\real{\mathbb{R}}
\def\I{\mathbb{1}}
\def\ot{\otimes}
\newcommand{\out}[2]{| #1\rangle\langle #2 |}
\newcommand{\Inner}[2]{\left\langle #1 , #2\right\rangle}
\newcommand{\Innerm}[3]{\left\langle #1 \left| #2 \right| #3 \right\rangle}
\newcommand{\pa}[1]{(#1)}
\newcommand{\Pa}[1]{\left(#1\right)}
\newcommand{\Br}[1]{\left[#1\right]}
\newcommand{\set}[1]{\{#1\}}
\newcommand{\Set}[1]{\left\{#1\right\}}
\newcommand{\ket}[1]{|#1\rangle}
\def\Jamiolkowski{J}
\newcommand{\jam}[1]{\Jamiolkowski\pa{#1}}
\DeclareMathOperator{\trace}{Tr}
\newcommand{\ptr}[2]{\trace_{#1}\pa{#2}}
\newcommand{\Ptr}[2]{\trace_{#1}\Pa{#2}}
\newcommand{\Tr}[1]{\Ptr{}{#1}}
\def\cI{\mathcal{I}}
\def\cT{\mathcal{T}}
\def\bsE{\boldsymbol{E}}
\def\bsH{\boldsymbol{H}}
\def\bsK{\boldsymbol{K}}\def\bsM{\boldsymbol{M}}\def\bsO{\boldsymbol{O}}
\def\bsP{\boldsymbol{P}}
\def\bsX{\boldsymbol{X}}\def\bsY{\boldsymbol{Y}}
\def\bsa{\boldsymbol{a}}\def\bsb{\boldsymbol{b}}\def\bse{\boldsymbol{e}}
\def\bsr{\boldsymbol{r}}
\def\sE{\mathscr{E}}
\newtheorem{thrm}{Theorem}
\newtheorem{cor}{Corollary}
\theoremstyle{definition}
\newtheorem{definition}{Definition}
\newtheorem{remark}{Remark}
\newtheorem{exam}{Example}
\begin{document}

\title{\bf\large Mixed-permutation channel with its application to estimate quantum coherence}

\author{\blue{Lin Zhang}$^1$\footnote{E-mail: godyalin@163.com}\quad and\quad \blue{Ming-Jing Zhao}$^2$\footnote{E-mail: zhaomingjingde@126.com}\\
  {\it\small $^1$Institute of Mathematics, Hangzhou Dianzi University, Hangzhou 310018, PR~China}\\
  {\it\small $^2$School of
Science, Beijing Information Science and Technology University,
Beijing, 100192, PR~China}}
\date{}
\maketitle

\begin{abstract}
Quantum channel, as the information transmitter, is an indispensable
tool in quantum information theory. In this paper, we study a class
of special quantum channels named the \emph{mixed-permutation
channels}. The properties of these channels are characterized. The
mixed-permutation channels can be applied to give a lower bound of
quantum coherence with respect to any coherence measure. In
particular, the analytical lower bounds for $l_1$-norm
coherence and the relative entropy of coherence are shown respectively. The
extension to bipartite systems is presented for the actions of the
mixed-permutation channels.
\end{abstract}

\newpage

\section{Introduction}

Quantum channel is a deterministic quantum operation and it
transmits the information from the input to the output. It is
indispensable in various information processing tasks, such as
quantum communication \cite{Takagi2020}, quantum computation
\cite{Divincenzo1995,Ekert1996} and quantum cryptography
\cite{Gisin2002,Pirandola2020}. So the study about quantum channel
is an interesting and attractive topic. Explicitly, the resource
theory of quantum channels deals with understanding the properties
and capabilities of quantum channels in terms of the resources they
consume and produce
\cite{Chitambar2019,Wang2019,XWang2019,Kamin2020,Liu2020,Zhou2022}.
It also has important implications for the design of quantum
protocols, as it provides a unified framework \cite{Takagiphd} for
characterizing the capabilities and limitations of different types
of quantum channels. For quantum channels, the uncertainty
relations, a fundamental topic in quantum mechanics, are also
established recently. They describe the theoretical restrictions
about two or more quantum channels \cite{Fu,Zhou,Q.Zhang2021}, just
like that about two or more observables. In addition, quantum
channels are also intimately related to the dynamical behavior of
quantum states and the investigation of open quantum system
\cite{Bromley,Yu,Hu}.

One special kind of the most commonly-used quantum channels is the
projective measurement. Within the framework of the projective
measurement, many quantum features are manifested. For instance,
quantum coherence is such a quantum feature and it is of practical
significance in quantum computation and quantum communication
\cite{A.Streltsov-rev,E.Chitambar-review,M.Hu}. The quantum
coherence is further quantified by coherence measures such as  the
$l_1$-norm coherence \cite{Baumgratz2014}, the relative entropy of
coherence  \cite{Baumgratz2014}, the coherence concurrence
\cite{X.Qi}, the geometric coherence \cite{A.Streltsov2015} and the
robustness of coherence \cite{C.Napoli} and so on. However, the
evaluation of the coherence measure is not always easy for arbitrary
quantum states, especially for mixed states.

In this paper, we investigate an interesting quantum channel called
the \emph{mixed-permutation channel}, which maps an input state to
an average of their permutation conjugations. This mixed-permutation
channel can be viewed as a special mixed-unitary channels with equal
weights \cite{J.Vicente}. More than that, it is also an example of a
covariant quantum channel \cite{Holevo2005}, meaning that it
commutes with certain symmetry operations on the input states. In
this case, this mixed-permutation channel is covariant with respect
to the conjugated action of the symmetric group $S_d$.

This work is factually a follow-up study of \cite{ZZ2022}, which
investigates the application of the mixed-permutation channel in
quantifying quantum coherence. Here we aim to explore the
mixed-permutation channel further and characterize it
systematically. Moreover, we find the application of the
mixed-permutation channel to evaluate the quantum coherence. This
evaluation is applicable to all coherence measures and is tight for
some special quantum states.

The paper is organized as follows. In Sect.~\ref{sect:2}, we give
the definition of the mixed-permutation channels and list their
basic properties. We also present the specific form of the
mixed-permutation channels, and characterize the image states. The
applications of the mixed-permutation channel in estimating quantum
coherence is also presented. Subsequently, in Sect.~\ref{sect:3}, we
extend the action of the mixed-permutation channels to bipartite
systems. We conclude this paper in Sect.~\ref{conclusion}.

\section{Mixed-permutation channel and its application in estimating quantum coherence}\label{sect:2}

In this section, we consider the $d$-dimensional quantum system with
its Hilbert space $\complex^d$ and fix the reference basis as
$\set{\ket{i}}_{i=1}^d$. The so-called quantum state is represented
by the density matrices/operators acting on the Hilbert space
$\complex^d$. We denote the set of all density matrices by
$$
\density{\complex^d}=\Set{\rho=\sum^d_{i,j=1} \rho_{ij} \out{i}{j}:
\rho\geqslant 0, \Tr{\rho}=1}.
$$
Any quantum channel $\Lambda$ acting on the density
matrices/operators in $\density{\complex^d}$ can be expressed with
Kraus representation $\Lambda(\cdot)=\sum_\mu K_\mu (\cdot)
K_\mu^\dagger$ where $\sum_\mu K_\mu^\dagger K_\mu=\I_d$, the
identity operator on $\complex^d$. Next we shall introduce the
mixed-permutation channels and characterize it both analytically and
geometrically. Then we explain its application in estimating quantum
coherence.

\subsection{Mixed-permutation channels}

\begin{definition}
For any matrix $\bsO$ acting on $\complex^d$, we define the
\emph{mixed-permutation channel} $\Delta$ as
\begin{eqnarray}
\Delta(\bsO):=\frac1{d!}\sum_{\pi\in
S_d}\bsP_\pi\bsO\bsP^\dagger_\pi,
\end{eqnarray}
where $S_d$ is the symmetric group of permutations of $d$ elements,
and the permutation matrix $\bsP_\pi$ is
\begin{eqnarray}
\bsP_\pi=\sum^d_{i=1}\out{\pi(i)}{i}.
\end{eqnarray}
induced by $\pi\in S_d$.
\end{definition}

By this definition we see for any matrix $\bsO$ acting on
$\complex^d$, the action of the mixed-permutation channel is the
average action over all possible permutation matrices. It is easily
seen that $\bsP_\pi\bsP_\sigma=\bsP_{\pi\sigma}$ and
$\bsP^\dagger_\pi=\bsP^\t_\pi=\bsP_{\pi^{-1}}$ for any $\pi$ and
$\sigma$ in $S_d$. Moreover, $\Delta(\bsO)=\frac1{d!}\sum_{\pi\in
S_d}\bsP^\dagger_\pi\bsO\bsP_\pi$. The properties of the
mixed-permutation channel $\Delta$ can be listed below:
\begin{enumerate}[(i)]
\item $\Delta \circ \Delta=\Delta$, that is $\Delta \circ \Delta(\bsO) =\Delta(\bsO)$ for any matrix $\bsO$ acting on $\complex^d$.
\item $\Delta$ is unital, that is $\Delta(\I_d)=\I_d$, where $\I_d$ is the identity matrix on $\complex^d$.
\item $\Delta$ is self-adjoint with respect to Hilbert-Schmidt inner product, i.e., $\Delta=\Delta^\dagger$ in the sense of $\Inner{\Delta^\dagger(\bsX)}{\bsY}=\Inner{\bsX}{\Delta(\bsY)}$ for all matrices $\bsX,\bsY$ acting on $\complex^d$, that is, $\Delta^\dagger(\bsO)=\frac1{d!}\sum_{\pi\in S_d}\bsP^\dagger_\pi\bsO\bsP_\pi=\Delta(\bsO)$.
\item $[\Delta(\bsO)]^\t = \Delta(\bsO^{\t})$, that is $[\Delta(\bsO)]^\t = (\frac1{d!}\sum_{\pi\in S_d}\bsP_\pi\bsO\bsP^\dagger_\pi)^\t=\frac1{d!}\sum_{\pi\in S_d}\bsP_\pi\bsO^\t\bsP^\dagger_\pi=\Delta(\bsO^{\t})$.
\item $\Delta$ is invariant under any permutation matrices, that is, $\bsP_\tau\Delta (\bsO)\bsP^\dagger_\tau=\Delta(\bsO)$ for any permutation matrix $\bsP_\tau$, induced by the permutation $\tau\in S_d$.
\end{enumerate}

Furthermore, the explicit form of the mixed-permutation channel on the
$\complex^d$ can be obtained analytically.

\begin{thrm}\label{th:1}
For any $d\times d$ matrix $\bsO=(o_{ij})$, the mixed-permutation
channel can be characterized as
\begin{eqnarray}
\Delta(\bsO)= \Tr{\bsO}\frac{\I_d}{d} +\Tr{\bsO(\bsE-\I_d)}
\frac{(\bsE-\I_d)}{d(d-1)},
\end{eqnarray}
where $\bsE:=\proj{\bse}$ for $d$-dimensional vector
$\bse=(1,1,\ldots,1)^\t$ with all entries being one.
\end{thrm}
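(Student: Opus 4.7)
My plan is to use the already-established properties of $\Delta$ (idempotence, unitality, self-adjointness, and permutation-invariance) to identify it as an orthogonal projection onto a 2-dimensional subspace, then read off the two coefficients.

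First, I would observe that by properties (i) and (v), $\Delta(\bsO)$ always lies in the fixed subspace
$$\mathcal{F} = \Set{\bsX \in \Lin{\complex^d} : \bsP_\tau \bsX \bsP_\tau^\dagger = \bsX \text{ for all } \tau \in S_d}.$$
A direct index analysis shows that $\bsX \in \mathcal{F}$ iff all diagonal entries of $\bsX$ coincide and all off-diagonal entries of $\bsX$ coincide, since the conjugation action of $S_d$ has exactly two orbits on the index set $\{1,\ldots,d\}^2$, namely the diagonal $\{(i,i)\}$ and the off-diagonal $\{(i,j): i \neq j\}$. Hence $\mathcal{F}$ is two-dimensional, and I would choose the basis $\{\I_d, \bsE - \I_d\}$, which is moreover orthogonal with respect to the Hilbert--Schmidt inner product since $\Inner{\I_d}{\bsE-\I_d} = \Tr{\bsE} - \Tr{\I_d} = d - d = 0$.

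Next, combining idempotence with self-adjointness (properties (i) and (iii)) shows that $\Delta$ is the orthogonal Hilbert--Schmidt projection onto $\mathcal{F}$. Therefore I can write
$$\Delta(\bsO) = \alpha\, \I_d + \beta\,(\bsE - \I_d)$$
and determine $\alpha, \beta$ by taking inner products with the two basis vectors. Taking the trace and using that $\Delta$ is trace-preserving (immediate from unitality and self-adjointness) gives $\alpha d = \Tr{\bsO}$, hence $\alpha = \Tr{\bsO}/d$. For $\beta$, I would compute $\Inner{\Delta(\bsO)}{\bsE-\I_d} = \Inner{\bsO}{\Delta(\bsE-\I_d)} = \Inner{\bsO}{\bsE-\I_d} = \Tr{\bsO(\bsE-\I_d)}$, using self-adjointness and the fact that $\bsE-\I_d \in \mathcal{F}$ is fixed by $\Delta$. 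On the other hand, $\Inner{\Delta(\bsO)}{\bsE-\I_d} = \beta\,\Tr{(\bsE-\I_d)^2}$, and the identity $(\bsE-\I_d)^2 = (d-2)\bsE + \I_d$ (using $\bsE^2 = d\bsE$) yields $\Tr{(\bsE-\I_d)^2} = d(d-1)$. Thus $\beta = \Tr{\bsO(\bsE-\I_d)}/[d(d-1)]$, which is the claimed formula.

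The main obstacle is not computational but conceptual: the clean proof depends on knowing that $\mathcal{F}$ is exactly two-dimensional, which requires the orbit-counting step above. A purely entrywise alternative is also possible — for each $(i,j)$, average $o_{\pi^{-1}(i)\pi^{-1}(j)}$ over $\pi \in S_d$ and obtain $\Tr{\bsO}/d$ on the diagonal and $\sum_{k\neq l}o_{kl}/[d(d-1)] = \Tr{\bsO(\bsE-\I_d)}/[d(d-1)]$ off-diagonal — but I find the projection picture cleaner and more in keeping with the spirit of the paper, which emphasizes the structural properties of $\Delta$.
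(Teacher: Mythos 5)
Your proof is correct, but it takes a genuinely different route from the paper's. The paper works entrywise: it reduces to Hermitian $\bsO$ by linearity, uses $2$-transitivity of $S_d$ to show that all diagonal entries of $\Delta(\bsO)$ coincide (and equal $\Tr{\bsO}/d$) and all off-diagonal entries coincide, and then pins down the off-diagonal constant $\zeta$ by sandwiching with $\bsE$ and using $\bsP_\pi\bsE=\bsE$ to get $\bsE\Delta(\bsO)\bsE=\bsE\bsO\bsE$ --- essentially the ``entrywise alternative'' you sketch in your last paragraph. Your main argument instead packages the same combinatorial input (the two orbits of $S_d$ on index pairs) into the statement that the fixed-point space $\mathcal{F}$ is two-dimensional, identifies $\Delta$ as the orthogonal Hilbert--Schmidt projection onto $\mathcal{F}$ via idempotence and self-adjointness (you should also note explicitly that every element of $\mathcal{F}$ is fixed by $\Delta$, so that the range is all of $\mathcal{F}$ and not a proper subspace --- you do use this for $\bsE-\I_d$ later, so the ingredient is present), and reads off the coefficients against the orthogonal basis $\{\I_d,\bsE-\I_d\}$. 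This buys a cleaner conceptual picture and makes the later properties of $\Delta$ (e.g.\ covariance, the decomposition $\Delta=\Omega+\Theta$) transparent, at the cost of slightly more setup; the paper's computation is more elementary and self-contained. One small point of care: since the theorem is stated for arbitrary, not necessarily Hermitian, $\bsO$, the sesquilinear Hilbert--Schmidt pairing $\Inner{\bsX}{\bsY}=\Tr{\bsX^\dagger\bsY}$ would put a $\bsO^\dagger$ (hence a complex conjugate) into your formula for $\beta$; this is harmless if you either use the bilinear pairing $(\bsX,\bsY)\mapsto\Tr{\bsX\bsY}$ together with $\Tr{\bsX\Delta(\bsY)}=\Tr{\Delta(\bsX)\bsY}$, or first restrict to Hermitian $\bsO$ and extend by linearity as the paper does.
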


\begin{proof}
In fact, this result will be established by linearity once we show
it holds for any Hermitian matrix $\bsO$ because every complex
square matrix $\bsM$ can be represented as a complex linear
combination of two Hermitian matrices, i.e.,
$\bsM=\bsH+\mathrm{i}\bsK$, where $\bsH:=\frac{\bsM+\bsM^\dagger}2$
and $\bsK=\frac{\bsM-\bsM^\dagger}{2\mathrm{i}}$. In what follows,
it suffices to consider Hermitian case.

Let us assume that $\bsO$ is
Hermitian. If $i\neq j$, then
\begin{eqnarray*}
\Innerm{i}{\Delta(\bsO)}{j} &=& \frac1{d!}\sum_{\pi\in
S_d}\Innerm{i}{\bsP^\dagger_\pi\bsO\bsP_\pi}{j} =
\frac1{d!}\sum_{\pi\in S_d}\Innerm{\pi(i)}{\bsO}{\pi(j)}=
\frac1{d!}\sum_{\pi\in S_d}o_{\pi(i)\pi(j)}.
\end{eqnarray*}
Now there exists a permutation $\tau_0\in S_d$ such that
$i=\tau_0(1)$ and $j=\tau_0(2)$ due to the assumption $i\neq j$.
Then
\begin{eqnarray*}
\sum_{\pi\in S_d}o_{\pi(i)\pi(j)} = \sum_{\pi\in
S_d}o_{\pi\tau_0(1)\pi\tau_0(2)}= \sum_{\pi\in S_d}o_{\pi(1)\pi(2)},
\end{eqnarray*}
implying that, whenever $i\neq j$, then
\begin{eqnarray*}
\Innerm{i}{\Delta(\bsO)}{j} = \Innerm{1}{\Delta(\bsO)}{2}.
\end{eqnarray*}
Similarly, if $i=j$, then
\begin{eqnarray*}
\Innerm{i}{\Delta(\bsO)}{i} = \frac1{d!}\sum_{\pi\in
S_d}o_{\pi(i)\pi(i)} =\frac1{d!}\sum_{\pi\in
S_d}o_{\pi(1)\pi(1)}=\Innerm{1}{\Delta(\bsO)}{1}.
\end{eqnarray*}
Note that $\Tr{\bsO}=\Tr{\Delta(\bsO)} =
\sum^d_{i=1}\Innerm{i}{\Delta(\bsO)}{i}=d\Innerm{1}{\Delta(\bsO)}{1}$,
which means that
\begin{eqnarray*}
\Innerm{i}{\Delta(\bsO)}{i} = \frac{\Tr{\bsO}}{d}.
\end{eqnarray*}
From the above discussion, let $\bsE:=\proj{\bse}$, where
$\bse=(1,\ldots,1)^\t$. We see that
\begin{eqnarray*}
\Delta(\bsO) = \frac{\Tr{\bsO}}d\I_d + \zeta(\bsE-\I_d).
\end{eqnarray*}
Then, via $\bsE^2=d\cdot\bsE$,
\begin{eqnarray*}
\bsE\Delta(\bsO)\bsE = \frac{\Tr{\bsO}}d\bsE^2 +
\zeta(\bsE^3-\bsE^2) = [\Tr{\bsO} + \zeta d(d-1)]\bsE.
\end{eqnarray*}
Because $\bsP_\pi\bsE=\bsE$, we get that $\bsE\Delta(\bsO)\bsE =
\bsE\bsO\bsE=\Innerm{\bse}{\bsO}{\bse}\bsE$. From these, we see that
\begin{eqnarray*}
\Innerm{\bse}{\bsO}{\bse} = \Tr{\bsO} + \zeta d(d-1),
\end{eqnarray*}
from which we get
\begin{eqnarray*}
\zeta =\frac{\Innerm{\bse}{\bsO}{\bse}-\Tr{\bsO}}{d(d-1)}
=\frac{2\re(o_{ij})}{d(d-1)}=\binom{d}{2}^{-1}\re(o_{ij}).
\end{eqnarray*}
Therefore
\begin{eqnarray*}
\Delta(\bsO) &=& \frac{\Tr{\bsO}}d\I_d + \zeta(\bsE-\I_d)\\
&=&  \frac{\Tr{\bsO}}d\I_d +
\frac{\Innerm{\bse}{\bsO}{\bse}-\Tr{\bsO}}{d(d-1)}(\bsE-\I_d)\\
&=& \frac{\Tr{\bsO}}d\I_d +
\frac{\Tr{\bsO\bsE}-\Tr{\bsO}}{d(d-1)}(\bsE-\I_d).
\end{eqnarray*}
After simplifying it then we get the desired expression. This
completes the proof.
\end{proof}

In Theorem~\ref{th:1}, when $\bsO$ is taken as a quantum state
$\rho\in \density{\complex^d}$, i.e., if a quantum state $\rho$ goes
through the mixed-permutation channel $\Delta$, then the output
state $\Delta(\rho)$ can be characterized analytically.
\begin{cor}\label{cor1}
For any quantum state $\rho=\sum^d_{i,j=1} \rho_{ij}
\out{i}{j}\in\density{\complex^d}$, the corresponding output state
$\rho_{\star}:=\Delta(\rho)$ from the mixed-permutation channel is
given by
\begin{eqnarray}\label{eqformrhop}
\rho_{\star}=\Pa{\begin{array}{cccccccc}
d^{-1} & \mathfrak{a} & \cdots & \mathfrak{a} \\
\mathfrak{a} & d^{-1} & \cdots & \mathfrak{a}\\
\vdots &\vdots & \ddots & \vdots \\
\mathfrak{a} & \mathfrak{a} &\cdots & d^{-1}
\end{array}},
\end{eqnarray}
where
\begin{eqnarray}\label{a}
\mathfrak{a}:= \frac{1}{d(d-1)} \sum_{i\neq j} \rho_{ij}
\end{eqnarray}
for the given reference basis $\set{\ket{i}:1,\ldots,d}$.
\end{cor}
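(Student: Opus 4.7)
The plan is to apply Theorem~\ref{th:1} directly with $\bsO = \rho$ and then read off the matrix entries. Since $\rho$ is a density matrix, the first coefficient collapses immediately: $\Tr{\rho} = 1$, so the $\I_d/d$ term contributes $1/d$ on the diagonal and $0$ off the diagonal. All the substantive work is therefore in simplifying the coefficient in front of $(\bsE - \I_d)/(d(d-1))$.

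To that end, I would compute $\Tr{\rho(\bsE - \I_d)} = \Tr{\rho\bsE} - \Tr{\rho}$. Since $\bsE = \proj{\bse}$ with $\bse = (1,\ldots,1)^\t$, one has $\Tr{\rho\bsE} = \Innerm{\bse}{\rho}{\bse} = \sum_{i,j=1}^d \rho_{ij}$. Combining with $\Tr{\rho} = \sum_i \rho_{ii} = 1$ gives
\begin{eqnarray*}
\Tr{\rho(\bsE - \I_d)} = \sum_{i,j} \rho_{ij} - \sum_i \rho_{ii} = \sum_{i\neq j} \rho_{ij},
\end{eqnarray*}
so the coefficient in the second term of Theorem~\ref{th:1} is exactly the quantity $\mathfrak{a}$ defined in \eqref{a}.

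It then remains to assemble the matrix. The matrix $\bsE - \I_d$ has $0$ on the diagonal and $1$ on every off-diagonal entry, while $\I_d/d$ has $1/d$ on the diagonal and $0$ off-diagonal. Summing the two contributions entry-by-entry yields exactly the form \eqref{eqformrhop}. There is no real obstacle here beyond bookkeeping; the corollary is essentially a restatement of Theorem~\ref{th:1} specialized to trace-one operators, with the coefficient rewritten using $\sum_i \rho_{ii} = 1$ to replace $\Innerm{\bse}{\rho}{\bse} - \Tr{\rho}$ by the more transparent sum $\sum_{i\neq j}\rho_{ij}$.
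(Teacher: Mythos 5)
Your proposal is correct and matches the paper's intent exactly: the corollary is presented there as an immediate specialization of Theorem~\ref{th:1} to a trace-one operator, with the only computation being $\Tr{\rho(\bsE-\I_d)}=\Innerm{\bse}{\rho}{\bse}-\Tr{\rho}=\sum_{i\neq j}\rho_{ij}$, which is precisely what you carried out. Nothing is missing.
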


\begin{remark}
We remark here that $\mathfrak{a}\in\Br{-\frac1{d(d-1)},\frac1d}$. Indeed,
\begin{eqnarray*}
\sum_{i\neq j} \rho_{ij} = \sum_{i,j=1}^d \rho_{ij}-1 =
\Tr{\rho\bsE}-1=\Innerm{\bse}{\rho}{\bse}-1.
\end{eqnarray*}
Therefore, combining
$d\lambda_{\min}(\rho)\leqslant\Innerm{\bse}{\rho}{\bse}\leqslant
d\lambda_{\max}(\rho)$, where $\lambda_{\min}(\rho)$ and $\lambda_{\max}(\rho)$ are the minimum and maximum eigenvalues of the quantum state $\rho$ respectively, we have,
\begin{eqnarray}\label{eq:parametera}
-\frac1{d(d-1)}\leqslant\frac{d\lambda_{\min}(\rho)-1}{d(d-1)}\leqslant
\mathfrak{a}=\frac1{d(d-1)}\sum_{i\neq j}\rho_{ij}\leqslant
\frac{d\lambda_{\max}(\rho)-1}{d(d-1)}\leqslant \frac1d.
\end{eqnarray}
\end{remark}

In order to see clearly what is the image states $\Delta(\rho)$ of
the channel $\Delta$ when input state is $\rho$, let us focus on
qubit systems. In fact, any qubit state can be represented as
\begin{eqnarray}\label{eq:Blochqubit}
\rho=\frac{1}{2}(\I_2+\bsr\cdot \boldsymbol{\sigma})=\frac{1}{2}
\left(
\begin{array}{cc}
1+r_3 & r_1-\mathrm{i}r_2\\
r_1+\mathrm{i}r_2 & 1-r_3
\end{array}
\right),
\end{eqnarray}
where $r_i\in \real$, $r:=\abs{\bsr}=\sqrt{r_1^2+r_2^2+r_3^2}
\leqslant 1$, and
$\bsr\cdot\boldsymbol{\sigma}:=\sum^3_{k=1}r_k\sigma_k$,
$\boldsymbol{\sigma}:=(\sigma_1,\sigma_2,\sigma_3)$ is the vector
of the Pauli matrices, given by
\begin{eqnarray}\label{eq:Pauli}
\sigma_1=\Pa{\begin{array}{cc}
               0 & 1 \\
               1 & 0
             \end{array}
},\quad\sigma_2=\Pa{\begin{array}{cc}
                      0 & -\mathrm{i} \\
                      \mathrm{i} & 0
                    \end{array}
},\quad \sigma_3=\Pa{\begin{array}{cc}
                       1 & 0 \\
                       0 & -1
                     \end{array}
}.
\end{eqnarray}
It is easily seen that the action of $\Delta$ is
\begin{eqnarray}\label{Drho2}
\Delta(\rho)=\frac{1}{2}(\I_2+\bsr^\prime\cdot \boldsymbol{\sigma})=\frac{1}{2}
\left(
\begin{array}{cc}
1 & r_1\\
r_1 & 1
\end{array}
\right).
\end{eqnarray}
So the action of the mixed-permutation channel on the set of quantum
states can be simulated by the transformation on the Bloch ball,
i.e.,
\begin{eqnarray}
\bsr=(r_1,r_2,r_3) \rightarrow \bsr^\prime=(r_1,0,0).
\end{eqnarray}
From Figure~\ref{fig:actiondelta}, we see that the Bloch ball is
mapped into a red line segment by the action of the
mixed-permutation channel.
\begin{figure}[h!]
\centering
\begin{minipage}[t]{0.4\textwidth}
\includegraphics[width=1\textwidth]{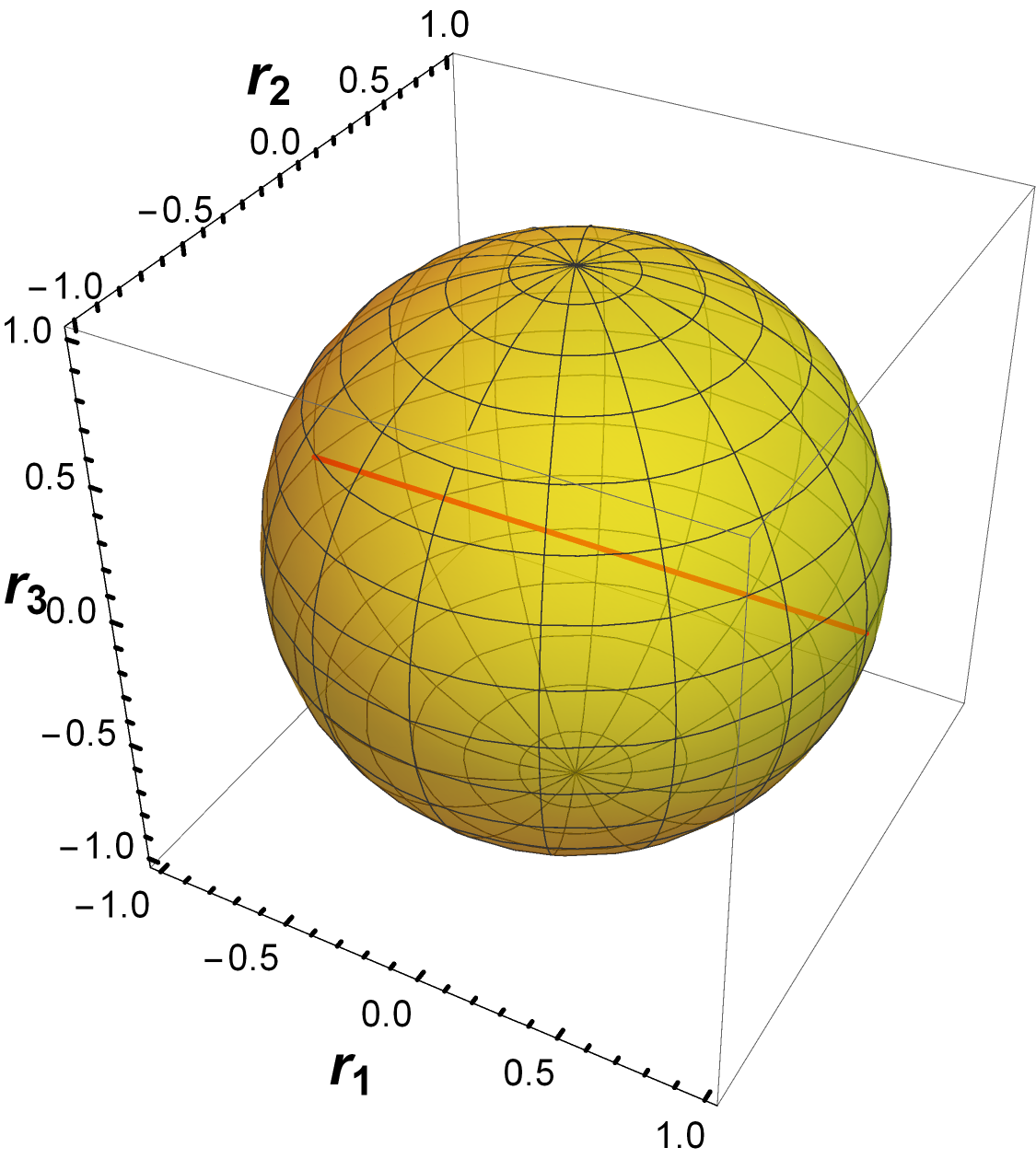}
\end{minipage}
\caption{(Color Online) The whole yellow Bloch ball stands for the
family of all qubit states in Eq. (\ref{eq:Blochqubit}); the red
line segment stands for the image states $\Delta(\rho)$ in Eq.
(\ref{Drho2}) when qubit states $\rho$ through the mixed-permutation
channel.}\label{fig:actiondelta}
\end{figure}

\subsection{Applying the mixed-permutation channel to estimate quantum coherence}

In order to estimate quantum coherence using the mixed-permutation
channel, we need recall some notions about quantum coherence. Given
the reference basis $\set{\ket{i}}_{i=1}^d$, a diagonal quantum
state $\varrho=\sum_{i = 1}^d \lambda_i\proj{i}$ is called the
incoherent state. The set of incoherent states is denoted as $\cI$.
The so-called incoherent operation is a completely positive linear
mapping $\Phi$ such that its Kraus decomposition
$\Phi(\rho)=\sum_{\nu} K_\nu\rho K_\nu^{\dagger}$ with $\sum_{\nu}
K_\nu^{\dagger} K_\nu=\I$ fulfills $K_\nu\delta
K_\nu^{\dagger}/\Tr{K_\nu\delta K_\nu^{\dagger}}\in \cI$ for all
$\delta\in \cI$ and for all $\nu$.

The quantum coherence is quantified by a nonnegative function named
the coherence measure. A coherence measure is required to satisfy
the following conditions \cite{Baumgratz2014}:
\begin{enumerate}
\item[(A1)] Nonnegativity: $C(\rho)\geqslant0$ for $\rho\in\density{\complex^d}$, and moreover $C(\rho)=0$ for all $\rho\in\cI$;
\item[(A2)] Monotonicity:
$C(\Phi(\rho))\leqslant C(\rho)$ for any incoherent operation
$\Phi$;
\item[(A3)] Strong monotonicity: $\sum_{\nu}p_{\nu}C(K_{\nu}\rho
K_{\nu}^{\dagger}/p_{\nu})\leqslant C(\rho)$ for any incoherent
operation $\Phi(\rho)=\sum_\nu K_\nu\rho K_\nu^{\dagger}$ with
$p_\nu=\Tr{K_{\nu}\rho K_{\nu}^{\dagger}}$ and
$\rho_{\nu}=K_{\nu}\rho K_{\nu}^{\dagger}/p_{\nu}$;
\item[(A4)] Convexity: $C(\rho)\leqslant \sum_{k} p_{k}C(\rho_{k})$ for
any quantum state $\rho=\sum_{k}p_{k}\rho_{k}$ with $ p_k\geqslant
0$ and $\sum_k p_k=1$.
\end{enumerate}
Two commonly used coherence measures are the $l_1$-norm coherence
and the relative entropy of coherence \cite{Baumgratz2014}. The
$l_1$-norm coherence of the quantum state $\rho=\sum_{i,j} \rho_{ij}
|i\rangle \langle j|$ is the sum of the magnitudes of all the
off-diagonal entries
\begin{eqnarray}
C_{l_1}(\rho):=\sum_{i\neq j} \abs{\rho_{ij}}.
\end{eqnarray}
The relative entropy of coherence is the difference of  von Neumann entropy between the density matrix and the diagonal matrix given by its diagonal entries,
\begin{eqnarray}
C_r(\rho):=S(\Pi(\rho))-S(\rho),
\end{eqnarray}
where $\Pi(\rho)=\diag(\rho_{11}, \rho_{22}, \cdots, \rho_{dd})$ is
the diagonal matrix obtained by the diagonal entries of $\rho$ and
$S(\rho):=-\Tr{\rho\ln\rho}$ is the von Neumann entropy of the state
$\rho$.

Now we come back to the mixed-permutation channel. It is obvious
that $\bsP_{\pi} \rho \bsP^\dagger_{\pi}$ is incoherent for any
incoherent state $\rho$. So the mixed-permutation channel is an
incoherent operation. By the monotonicity of the coherence measures,
we know the coherence of input state $\rho$ is nonincreasing under
this channel. Since the output states of the mixed-permutation
channel have an analytical form $\rho_{\star}$ in
Eq.~\eqref{eqformrhop}, we can utilize it to estimate the coherence
of the input state $\rho$.

\begin{thrm}\label{th:2}
For any coherence measure $C$ and any quantum state
$\rho\in\density{\complex^d}$, the quantum coherence of $\rho$ is
bounded from below by the quantum coherence of $\rho_{\star}$,
namely,
\begin{eqnarray}\label{eq:universal}
C(\rho)\geqslant C(\rho_{\star}).
\end{eqnarray}
\end{thrm}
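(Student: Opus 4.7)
The plan is to derive inequality \eqref{eq:universal} as a direct consequence of the monotonicity axiom (A2) for coherence measures, once we verify that the mixed-permutation channel $\Delta$ itself qualifies as an incoherent operation. So the entire proof reduces to exhibiting a Kraus decomposition of $\Delta$ whose Kraus operators preserve the set $\cI$ of incoherent states in the strong sense required by the definition.

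First I would write down the natural Kraus representation of $\Delta$: take
\begin{eqnarray*}
K_\pi := \frac{1}{\sqrt{d!}}\, \bsP_\pi,\qquad \pi \in S_d,
\end{eqnarray*}
so that $\Delta(\rho) = \sum_{\pi \in S_d} K_\pi \rho K_\pi^\dagger$. The completeness relation $\sum_\pi K_\pi^\dagger K_\pi = \I_d$ follows immediately from $\bsP_\pi^\dagger \bsP_\pi = \I_d$ together with $|S_d| = d!$. Next I would check that each $K_\pi$ is an incoherent Kraus operator: for every diagonal $\delta = \sum_i \lambda_i \proj{i} \in \cI$, one has $\bsP_\pi \delta \bsP_\pi^\dagger = \sum_i \lambda_i \proj{\pi(i)}$, which remains diagonal in the reference basis and is therefore incoherent. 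Normalizing by the (nonzero) trace preserves this property, confirming that the condition $K_\pi \delta K_\pi^\dagger / \Tr{K_\pi \delta K_\pi^\dagger} \in \cI$ holds for every $\delta \in \cI$ and every $\pi \in S_d$.

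With $\Delta$ identified as an incoherent operation, the monotonicity axiom (A2) applied to $\Phi = \Delta$ yields
\begin{eqnarray*}
C(\rho_\star) = C(\Delta(\rho)) \leqslant C(\rho),
\end{eqnarray*}
which is exactly the claim. The only conceptually delicate step, and the one I would state carefully, is the verification that permutation matrices map incoherent states to incoherent states; everything else is bookkeeping. No hard estimate or case analysis is required, so I do not expect any genuine obstacle. If a more refined statement were desired, one could additionally note that each individual $\bsP_\pi(\cdot)\bsP_\pi^\dagger$ is even a unitary incoherent channel (and in fact implements a relabeling of the basis), so that $\Delta$ is a convex combination of such channels and one could alternatively invoke convexity (A4) together with basis-permutation invariance of any coherence measure to obtain the same bound.
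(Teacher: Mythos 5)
Your proof is correct, but it takes a different route from the paper's. You establish that $\Delta$ itself is an incoherent operation by exhibiting the Kraus decomposition $K_\pi=\bsP_\pi/\sqrt{d!}$ and checking that each $K_\pi$ maps $\cI$ into $\cI$, and then you invoke the monotonicity axiom (A2) once, applied to the whole channel. The paper instead never applies (A2) to $\Delta$ directly: it first shows $C(\bsP_\pi\rho\bsP_\pi^\dagger)=C(\rho)$ for each fixed $\pi$ (by applying (A2) to the unitary incoherent conjugations $\bsP_\pi(\cdot)\bsP_\pi^\dagger$ and $\bsP_\pi^\dagger(\cdot)\bsP_\pi$ in both directions), and then uses convexity (A4) on the mixture $\rho_\star=\frac{1}{d!}\sum_{\pi\in S_d}\bsP_\pi\rho\bsP_\pi^\dagger$. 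Both arguments are valid under axioms (A1)--(A4); yours is shorter and is essentially the observation the authors make informally in the paragraph preceding the theorem, while the paper's version isolates the slightly sharper fact that each individual permutation conjugation preserves coherence exactly, so that all loss of coherence is attributable to the convex averaging. You already note this alternative in your closing remark. One small point worth keeping explicit, as you did, is that $\Tr{K_\pi\delta K_\pi^\dagger}=1/d!>0$ for every density matrix $\delta$, so the normalization in the definition of an incoherent Kraus operator is never ill-defined.
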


\begin{proof}
First, since any permutation matrices  $\bsP_{\pi}$ and
$\bsP_{\pi^{-1}}$ are incoherent operations, so
$$
C(\rho)=C [\bsP^\dagger_{\pi}(\bsP_{\pi} \rho \bsP^\dagger_{\pi})
\bsP_{\pi}]\leqslant C (\bsP_{\pi} \rho \bsP^\dagger_{\pi})
\leqslant C(\rho).
$$
Therefore $C (\bsP_{\pi} \rho \bsP^\dagger_{\pi})=C(\rho)$ for any
permutation matrcix $\bsP_{\pi}$. Then by the convexity of coherence
measure, we have further that
\begin{eqnarray*}
C(\rho_{\star})=C(\Delta(\rho))&=&C\Pa{\frac1{d!}\sum_{\pi\in S_d}
\bsP_{\pi} \rho \bsP^\dagger_{\pi}}\\
&\leqslant& \frac1{d!}\sum_{\pi\in S_d} C \Pa{\bsP_{\pi} \rho
\bsP^\dagger_{\pi}} \\
&=& \frac1{d!}\sum_{\pi\in S_d} C (\rho)\\
&=&C(\rho),
\end{eqnarray*}
which completes the proof.
\end{proof}

From the proof above, we see that Theorem~\ref{th:2} is universal in
the sense that Eq.~\eqref{eq:universal} is independent of the
coherence measures. So Theorem~\ref{th:2} can be applied to all
coherence measures. Additionally, the output states $\rho_{\star}$
is of one parameter, so its coherence is much easy to calculate
compared with the input state.
In fact, for any quantum state $\rho=\sum^d_{i,j=1} \rho_{ij}
|i\rangle\langle j|$, the output state $\rho_{\star}$ in
Corollary~\ref{cor1} can be decomposed as
\begin{eqnarray}\label{mmcs}
\rho_{\star}=(1-\mathfrak{p})\I_d/d+ \mathfrak{p} \proj{\Phi_d},
\end{eqnarray}
where the weight is relevant to the quantum state as
\begin{eqnarray}\label{p}
\mathfrak{p}:=\frac1{d-1}\sum_{i\neq j}
\rho_{ij}\in\Br{-\frac1{d-1},1}
\end{eqnarray}
and $\ket{\Phi_d}:=\frac{1}{\sqrt{d}} \sum^d_{i=1} \ket{i}$ is the
{maximally coherent state} \cite{YPeng}. This state $\rho_{\star}$
is also called the \emph{maximally coherent mixed state} in
\cite{Singh2015}. By this decomposition in Eq.~\eqref{mmcs}, we get
the eigenvalues of $\rho_{\star}$ is
$\mathfrak{p}+\frac{1-\mathfrak{p}}d$ and $(1-\mathfrak{p})/d$ with
multiplicity $d-1$. In view of this, we get the following result:
\begin{cor}
For any quantum state $\rho\in\density{\complex^d}$, the
$l_1$-norm coherence and the relative entropy of coherence can be valuated from below by
\begin{eqnarray}\label{eq:universal}
C_{l_1}(\rho)&\geqslant&d(d-1)\abs{\mathfrak{a}}, \label{lower bound l1}\\
C_r(\rho)&\geqslant&
\Pa{1-\frac1d}(1-\mathfrak{p})\ln(1-\mathfrak{p})+\frac1d[(d-1)\mathfrak{p}+1]\ln[(d-1)\mathfrak{p}+1],
\end{eqnarray}
respectively, where $\mathfrak{a}$ and $\mathfrak{p}$ are depended
by $\rho$ as Eqs. \eqref{a} and \eqref{p}.
\end{cor}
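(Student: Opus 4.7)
The strategy is a direct application of Theorem~\ref{th:2} together with the explicit form of $\rho_\star$ given in Corollary~\ref{cor1} and the mixture decomposition \eqref{mmcs}. Since Theorem~\ref{th:2} gives $C(\rho)\geqslant C(\rho_\star)$ for \emph{every} coherence measure, it suffices to evaluate $C_{l_1}(\rho_\star)$ and $C_r(\rho_\star)$ in closed form as functions of the parameters $\mathfrak{a}$ and $\mathfrak{p}$, respectively.

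For the $l_1$-norm bound, I would read off $C_{l_1}(\rho_\star)$ directly from the matrix form \eqref{eqformrhop}. All diagonal entries of $\rho_\star$ equal $1/d$, and each of the $d(d-1)$ off-diagonal entries equals $\mathfrak{a}$, so summing the magnitudes yields $C_{l_1}(\rho_\star)=d(d-1)\abs{\mathfrak{a}}$. Chaining this with Theorem~\ref{th:2} gives \eqref{lower bound l1}.

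For the relative entropy bound, I would use the decomposition $\rho_\star=(1-\mathfrak{p})\I_d/d+\mathfrak{p}\proj{\Phi_d}$. Since $\I_d/d$ and $\proj{\Phi_d}$ commute and share the eigenvector $\ket{\Phi_d}$, the spectrum of $\rho_\star$ consists of the eigenvalue $\mathfrak{p}+(1-\mathfrak{p})/d=[(d-1)\mathfrak{p}+1]/d$ with multiplicity $1$ and the eigenvalue $(1-\mathfrak{p})/d$ with multiplicity $d-1$, as noted just before the corollary. Meanwhile, $\Pi(\rho_\star)=\I_d/d$, so $S(\Pi(\rho_\star))=\ln d$. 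Plugging these into $C_r(\rho_\star)=S(\Pi(\rho_\star))-S(\rho_\star)$ and expanding each $\ln[\,\cdot/d\,]$ as $\ln(\cdot)-\ln d$, the $\ln d$ contributions cancel because the two multiplicities $1$ and $d-1$ sum to $d$ and the two probabilities $[(d-1)\mathfrak{p}+1]/d$ and $(1-\mathfrak{p})/d$ are a probability distribution. The remaining terms regroup exactly into
\[
\Pa{1-\tfrac1d}(1-\mathfrak{p})\ln(1-\mathfrak{p})+\tfrac1d[(d-1)\mathfrak{p}+1]\ln[(d-1)\mathfrak{p}+1].
\]
Theorem~\ref{th:2} then promotes this equality for $\rho_\star$ to the asserted lower bound for $\rho$.

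There is no real obstacle here; the proof is essentially bookkeeping once Theorem~\ref{th:2} and Corollary~\ref{cor1} are in hand. The only place to be careful is the algebraic simplification in the $C_r$ computation, where one must verify that the $\ln d$ terms cancel cleanly and that the convention $0\ln 0=0$ handles the boundary cases $\mathfrak{p}=1$ or $\mathfrak{p}=-1/(d-1)$ appearing in \eqref{p}; everywhere else the arguments of the logarithms are strictly positive.
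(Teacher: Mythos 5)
Your proposal is correct and follows exactly the route the paper intends: invoke Theorem~\ref{th:2} for the generic inequality $C(\rho)\geqslant C(\rho_\star)$, then evaluate $C_{l_1}(\rho_\star)$ from the matrix form \eqref{eqformrhop} and $C_r(\rho_\star)$ from the spectrum $\{[(d-1)\mathfrak{p}+1]/d,\ (1-\mathfrak{p})/d\ (\times\, d-1)\}$ together with $\Pi(\rho_\star)=\I_d/d$, which is precisely the data the paper records immediately before stating the corollary. The algebraic cancellation of the $\ln d$ terms and the boundary convention $0\ln 0=0$ are handled correctly, so there is nothing to add.
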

For $l_1$ norm coherence, the lower bound in Eq.~\eqref{lower bound
l1} is tight for \emph{real} quantum states, that is,
\begin{eqnarray}
C_{l_1}(\rho)= C_{l_1}(\rho_{\star})
\end{eqnarray}
for any \emph{real} quantum state $\rho$. Now we consider the
$l_1$-norm coherence and the relative entropy of coherence in qubit
system.

\begin{exam}
In qubit systems, any qubit state $\rho$ can be represented as in
Eq.~\eqref{eq:Blochqubit} with the spectral decomposition as
$$
\rho=\lambda_1 \proj{\phi_1}+ \lambda_2 \proj{\phi_2}
$$
with the eigenvalues $\lambda_1=\frac{1}{2} (1+r)$,
$\lambda_2=\frac{1}{2} (1-r)$ and the corresponding eigenvectors
$$
\ket{\phi_1}=\frac{(r_1-\mathrm{i}r_2)\ket{0}-(r_3-r)\ket{1}}{\sqrt{2r(r-r_3)}},\quad
\ket{\phi_2}=\frac{(r_1-\mathrm{i}r_2)\ket{0}-(r_3+r)\ket{1}}{\sqrt{2r(r+r_3)}}.
$$
Let $r_2=r_3=\frac1{10}$, we calculate the $l_1$-norm coherence
$C_{l_1}(\rho)$ and the relative entropy coherence $C_{r}(\rho)$ as
well as the lower bound the $l_1$-norm coherence
$C_{l_1}(\rho_{\star})$ and the relative entropy coherence
$C_{r}(\rho_{\star})$ and make a comparison in Figures~\ref{fig:l1}
and \ref{fig:rel}.

\begin{figure}[htbp]
\centering
\subfigure[]
{
\begin{minipage}[t]{0.45\textwidth}
\includegraphics[width=0.9\textwidth,height=0.7\textwidth]{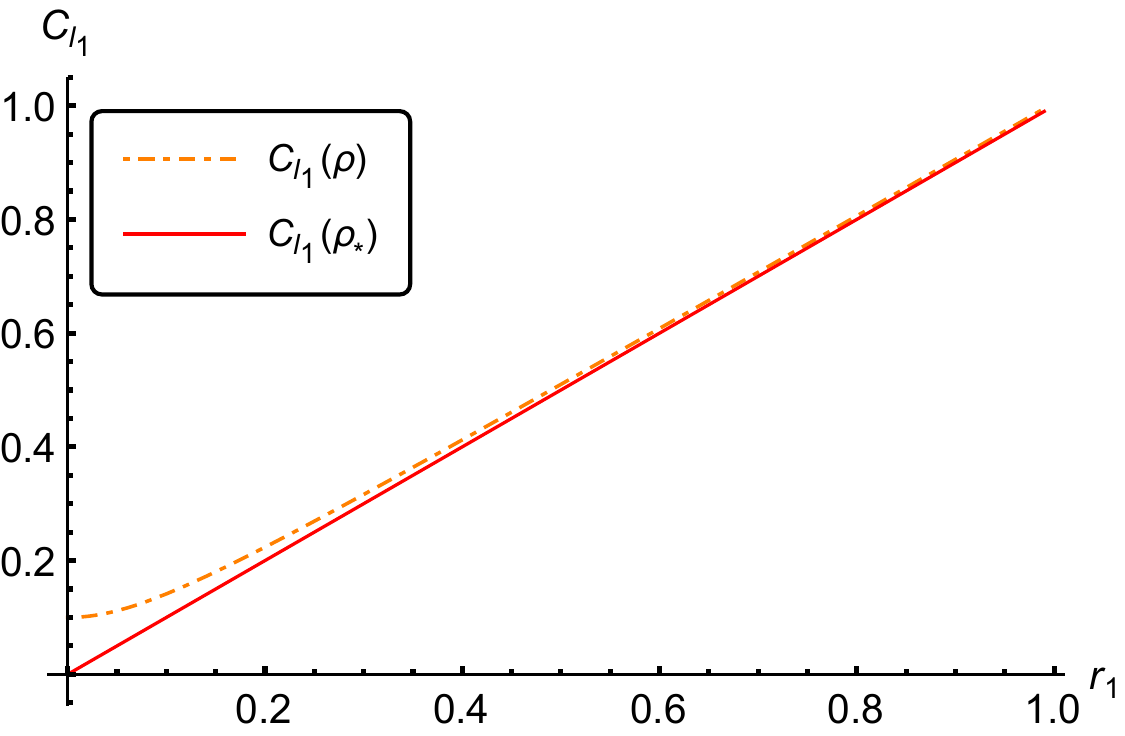}\label{fig:l1}
\end{minipage}
}
\quad
\subfigure[]
{
\begin{minipage}[t]{0.45\textwidth}
\includegraphics[width=0.9\textwidth,height=0.7\textwidth]{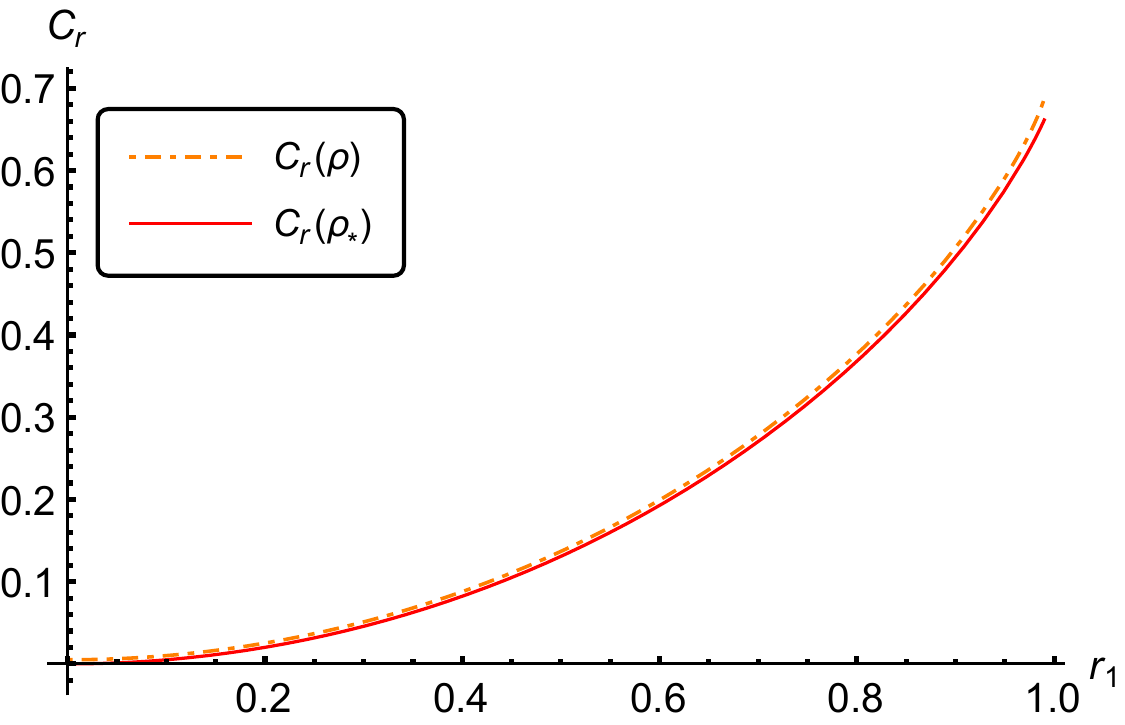}\label{fig:rel}
\end{minipage}
} \caption{(Color Online) The estimation of the coherence of $\rho$
by that of $\rho_{\star}$ using the $l_1$-norm coherence measure in
subfigure (a) and the relative entropy of coherence in subfigure
(b).}
\end{figure}
\end{exam}

From another point of view, it is
interesting to estimate quantum coherence from above.
Let us turn to another quantity called the \emph{coherence of
assistance}, induced by any coherence measure $C$, which is defined
in the form of concave bottom extension,
\begin{eqnarray}\label{Ca}
C_a(\rho)=\max \sum_k p_k C(\ket{\psi_k}),
\end{eqnarray}
where the maximization is taken over all pure state decompositions
of $\rho=\sum_k p_k \proj{\psi_k}$. Coherence of assistance
quantifies the coherence that can be extracted assisted by another
party under local measurements and classical communication
\cite{Chitambar2016}. Suppose Alice holds a state
$\rho^A=\rho=\sum_k p_k \proj{\psi_k}$ with coherence $C(\rho)$. Bob
holds another part of the purified state of $\rho$. The joint state
between Alice and Bob is $\sum_k p_k \ket{\psi_k}_A\ot\ket{k}_B$.
Bob performs local projective measurements $\Set{\proj{k}}$ along
the given basis and informs Alice the measurement outcomes by
classical communication. Alice's system will be in a pure state
ensemble $\Set{ p_k,\proj{\psi_k}}$ with average coherence $\sum_k
p_k C(\proj{\psi_k})$. The process is called assisted coherence
distillation. The maximum average coherence is called the coherence
of assistance which quantifies the one-way coherence distillation
rate \cite{Chitambar2016}.

The coherence of assistance of any quantum state
$\rho$ can be estimated by the output state
$\rho_{\star}=\Delta(\rho)$.

\begin{thrm}\label{th:3}
For any coherence measure $C$ and any quantum state
$\rho\in\density{\complex^d}$, the coherence of assistance of $\rho$
is bounded from above by the coherence of assistance of
$\rho_{\star}=\Delta(\rho)$,
\begin{eqnarray}
C_a(\rho)\leqslant C_a(\rho_{\star}).
\end{eqnarray}
\end{thrm}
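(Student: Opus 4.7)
The plan is to exploit the fact that the mixed-permutation channel $\Delta$ is an equal-weight convex combination of conjugations by permutation matrices, together with the convex-roof structure of $C_a$. Concretely, I would start from an optimal pure-state decomposition of $\rho$ that realises the maximum in Eq.~\eqref{Ca}, push it through $\Delta$, and verify that the resulting pure-state ensemble for $\rho_\star$ has average pure-state coherence equal to $C_a(\rho)$.

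First I would fix a decomposition $\rho=\sum_k p_k\proj{\psi_k}$ achieving $C_a(\rho)=\sum_k p_k C(\ket{\psi_k})$ (or, if the maximum is not attained, an $\varepsilon$-optimal decomposition and then let $\varepsilon\to 0$). Applying $\Delta$ termwise yields
\begin{eqnarray*}
\rho_\star=\Delta(\rho)=\frac1{d!}\sum_{\pi\in S_d}\sum_k p_k\,\bsP_\pi\proj{\psi_k}\bsP^\dagger_\pi=\sum_{\pi\in S_d}\sum_k \frac{p_k}{d!}\proj{\bsP_\pi\psi_k},
\end{eqnarray*}
which is a legitimate pure-state decomposition of $\rho_\star$ indexed by $(\pi,k)$.

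Next I would observe that each $\bsP_\pi$ is an incoherent unitary; by the argument already used in the proof of Theorem~\ref{th:2}, $C$ is invariant under conjugation by $\bsP_\pi$, hence $C(\ket{\bsP_\pi\psi_k})=C(\ket{\psi_k})$ for every $\pi$ and $k$. Substituting this into the definition of $C_a(\rho_\star)$ as a maximum over pure-state decompositions gives
\begin{eqnarray*}
C_a(\rho_\star)\geqslant\sum_{\pi\in S_d}\sum_k\frac{p_k}{d!}C(\ket{\bsP_\pi\psi_k})=\sum_k p_k C(\ket{\psi_k})=C_a(\rho),
\end{eqnarray*}
which is exactly the desired inequality.

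There is no substantial obstacle: the argument is a direct dual of the proof of Theorem~\ref{th:2}, where convexity of $C$ combined with permutation-invariance gave the inequality $C(\Delta(\rho))\leqslant C(\rho)$; here the concave-roof structure of $C_a$ combined with the same permutation-invariance flips the inequality. The only minor care point is whether the maximum in Eq.~\eqref{Ca} is attained on the compact set of pure-state decompositions of $\rho$; since the pure-state coherence $C(\ket{\psi})$ is continuous for the standard coherence measures and the ensemble space is compact, attainment holds and no approximation argument is strictly needed.
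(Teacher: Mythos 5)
Your proof is correct and follows essentially the same route as the paper: the paper invokes the concavity of $C_a$ together with its invariance under permutation conjugation to obtain $C_a(\Delta(\rho))\geqslant \frac{1}{d!}\sum_{\pi\in S_d}C_a(\bsP_\pi\rho\bsP_\pi^\dagger)=C_a(\rho)$, and your ensemble construction is exactly the proof of that concavity step made explicit. The only cosmetic difference is that you apply permutation-invariance of $C$ at the level of the pure states $\ket{\bsP_\pi\psi_k}$ rather than at the level of the conjugated mixed states, which makes the argument self-contained but does not change its substance.
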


\begin{proof}
Note that
\begin{eqnarray*}
C_a(\rho_{\star})&=&C_a(\Delta(\rho))\\&=&C_a\Pa{\frac1{d!} \sum_{\pi\in
S_d} \bsP_{\pi} \rho \bsP^\dagger_{\pi}} \\
&\geqslant& \frac1{d!} \sum_{\pi\in S_d} C_a (\bsP_{\pi} \rho
\bsP^\dagger_{\pi})\\
&=&C_a(\rho).
\end{eqnarray*}
Here the inequality is the concavity of the coherence of assistance.
The last equality is due to the equality $C_a (\bsP_{\pi} \rho
\bsP^\dagger_{\pi})=C_a(\rho)$ for any permutation $\pi\in S_d$.
\end{proof}

\section{Mixed-permutation channel on bipartite systems}\label{sect:3}
In this section we study the mixed-permutation channel on bipartite
systems $\complex^{d_A} \ot \complex^{d_B}$. Without loss of
generality, we suppose $d_A\leqslant d_B$. Suppose that
$\set{\ket{i}}^{d_A}_{i=1}$ and $\set{\ket{j}}^{d_B}_{j=1}$ are
rothonormal bases of the subsystems $\complex^{d_A}$ and
$\complex^{d_B}$, respectively, we denote by
$$
\density{\complex^{d_A} \ot
\complex^{d_B}}=\Set{\rho_{AB}=\sum^{d_A}_{i,i'=1}\sum^{d_B}_{j,j'=1}\rho_{i'j',ij}\out{i'j'}{ij}:\rho_{AB}
\geqslant 0,\Tr{\rho_{AB}}=1}
$$
the set of density operators acting on $\complex^{d_A} \ot
\complex^{d_B}$. The identical channel on operator spaces are
denoted by $\mathrm{id}$. By Theorem~\ref{th:1}, we
 derive the explicit form of the mixed-permutation channel on bipartite systems $\complex^{d_A} \ot
\complex^{d_B}$.

\begin{thrm}\label{th:4}
For any bipartite operator $\bsX_{AB}$ acting on $\complex^{d_A} \ot
\complex^{d_B}$, denote $\bsE_{A(B)}=\proj{\bse_{A(B)}}$ where
$\bse_{A(B)}$ is a $d_{A(B)}$-dimensional column vector with all
entries being one, we have
\begin{eqnarray}
(\Delta_A\ot\mathrm{id}_B)
(\bsX_{AB})=\frac{\I_A}{d_A}\ot\Ptr{A}{\bsX_{AB}}+\frac{\bsE_A-\I_A}{d_A(d_A-1)}\ot\Ptr{A}{\bsX_{AB}(\bsE_A-\I_A)\ot\I_B}
\end{eqnarray}
and thus
\begin{eqnarray}
(\Delta_A\ot\Delta_B) (\bsX_{AB})&=&c_0(\bsX_{AB})\I_{AB} +c_1(\bsX_{AB})
\I_A\ot (\bsE_B-\I_B)+c_2(\bsX_{AB})
(\bsE_A-\I_A)\ot\I_B\notag\\
&&+c_3(\bsX_{AB})(\bsE_A-\I_A)\ot(\bsE_B-\I_B).
\end{eqnarray}
where
\begin{eqnarray}
\begin{cases}
c_0(\bsX_{AB})= \frac1{d_Ad_B}\Tr{\bsX_{AB}},\\
c_1(\bsX_{AB})= \frac1{d_Ad_B(d_B-1)}\Tr{\bsX_{AB}(\bsE_A-\I_A)\ot\I_B},\\
c_2(\bsX_{AB})= \frac1{d_Ad_B(d_A-1)}\Tr{\bsX_{AB}\I_A\ot(\bsE_B-\I_B)},\\
c_3(\bsX_{AB})=
\frac1{d_Ad_B(d_A-1)(d_B-1)}\Tr{\bsX_{AB}(\bsE_A-\I_A)\ot(\bsE_B-\I_B)}.
\end{cases}
\end{eqnarray}
\end{thrm}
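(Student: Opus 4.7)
The plan is to reduce Theorem~\ref{th:4} to two applications of Theorem~\ref{th:1}, using the fact that $\Delta_A\otimes\mathrm{id}_B$ acts only on the $A$-factor and $\Delta_A\otimes\Delta_B=(\Delta_A\otimes\mathrm{id}_B)\circ(\mathrm{id}_A\otimes\Delta_B)$. The first formula is essentially Theorem~\ref{th:1} ``tensored with an identity''; the second follows by iterating, and the only real work is converting scalar traces on the single-party term into partial-trace expressions on $AB$.

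For the first identity, I would decompose $\bsX_{AB}=\sum_{j,j'}\bsY_{jj'}\otimes\ketbra{j}{j'}$ with $\bsY_{jj'}$ an operator on $\complex^{d_A}$, so that by linearity
\begin{eqnarray*}
(\Delta_A\otimes\mathrm{id}_B)(\bsX_{AB})=\sum_{j,j'}\Delta_A(\bsY_{jj'})\otimes\ketbra{j}{j'}.
\end{eqnarray*}
Theorem~\ref{th:1} applied on $A$ gives
\begin{eqnarray*}
\Delta_A(\bsY_{jj'})=\frac{\Tr{\bsY_{jj'}}}{d_A}\I_A+\frac{\Tr{\bsY_{jj'}(\bsE_A-\I_A)}}{d_A(d_A-1)}(\bsE_A-\I_A).
\end{eqnarray*}
The key identifications are
\begin{eqnarray*}
\sum_{j,j'}\Tr{\bsY_{jj'}}\ketbra{j}{j'}=\Ptr{A}{\bsX_{AB}},\qquad \sum_{j,j'}\Tr{\bsY_{jj'}(\bsE_A-\I_A)}\ketbra{j}{j'}=\Ptr{A}{\bsX_{AB}(\bsE_A-\I_A)\otimes\I_B},
\end{eqnarray*}
which plug straight in to yield the stated expression. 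This is essentially bookkeeping, but it is the conceptual step that lets a ``partial'' application of Theorem~\ref{th:1} be read off cleanly.

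For the second identity, I would next apply $\mathrm{id}_A\otimes\Delta_B$ (which by symmetry of the argument has an analogous form, namely $\Ptr{B}{\bsX_{AB}}\otimes \I_B/d_B+\Ptr{B}{\bsX_{AB}\I_A\otimes(\bsE_B-\I_B)}\otimes(\bsE_B-\I_B)/(d_B(d_B-1))$) to the result of step one. Equivalently, one may apply $\Delta_B$ separately to each of the two $A$-operators $\Ptr{A}{\bsX_{AB}}$ and $\Ptr{A}{\bsX_{AB}(\bsE_A-\I_A)\otimes\I_B}$ appearing on the $B$-side of the first formula. Each application of Theorem~\ref{th:1} produces one term proportional to $\I_B$ and one proportional to $\bsE_B-\I_B$, so expanding gives exactly four tensor monomials, matching the ansatz $c_0\I_{AB}+c_1\I_A\otimes(\bsE_B-\I_B)+c_2(\bsE_A-\I_A)\otimes\I_B+c_3(\bsE_A-\I_A)\otimes(\bsE_B-\I_B)$.

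To pin down the $c_k$, I would use the identities $\Tr{\Ptr{A}{\bsY}}=\Tr{\bsY}$ and $\Tr{\Ptr{A}{\bsY}\bsM_B}=\Tr{\bsY\,\I_A\otimes\bsM_B}$, together with the analogous statements after multiplying by $\bsE_A-\I_A$ on the $A$-side, to rewrite every scalar coefficient as a global trace of $\bsX_{AB}$ against a tensor of the form $\bsM_A\otimes\bsM_B$ with $\bsM_A\in\{\I_A,\bsE_A-\I_A\}$ and $\bsM_B\in\{\I_B,\bsE_B-\I_B\}$. The main (minor) obstacle is not the ideas but keeping the normalizations correct: each application of Theorem~\ref{th:1} contributes a factor $1/d$ or $1/(d(d-1))$, and after multiplying the two single-side factors one obtains precisely the denominators $d_Ad_B$, $d_Ad_B(d_B-1)$, $d_Ad_B(d_A-1)$, and $d_Ad_B(d_A-1)(d_B-1)$ listed for $c_0,c_1,c_2,c_3$. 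Since all steps are linear, no additional verification (e.g.\ on Hermitian matrices first) is required beyond what was already done in Theorem~\ref{th:1}.
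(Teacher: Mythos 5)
Your proof is correct, and it takes a somewhat cleaner route than the paper's. The paper first reduces to Hermitian operators, then specializes to a rank-one projector $\proj{\Psi_{AB}}$, writes its Schmidt decomposition $\sum_{i,j}\sqrt{\lambda_i\lambda_j}\,\out{\bsa_i}{\bsa_j}\ot\out{\bsb_i}{\bsb_j}$, applies Theorem~\ref{th:1} to each $\out{\bsa_i}{\bsa_j}$, recognizes the resulting sums as partial traces, and finally extends by linearity from pure states to positive operators to general operators (the two-sided case is dismissed with ``the proof goes similarly''). Your block decomposition $\bsX_{AB}=\sum_{j,j'}\bsY_{jj'}\ot\ketbra{j}{j'}$ accomplishes the same slice-by-slice application of Theorem~\ref{th:1} but works for an arbitrary operator in one pass, so the Hermitian reduction and the detour through Schmidt decompositions are unnecessary; the key identifications $\sum_{j,j'}\Tr{\bsY_{jj'}}\ketbra{j}{j'}=\Ptr{A}{\bsX_{AB}}$ and $\sum_{j,j'}\Tr{\bsY_{jj'}(\bsE_A-\I_A)}\ketbra{j}{j'}=\Ptr{A}{\bsX_{AB}(\bsE_A-\I_A)\ot\I_B}$ are exactly the content of the paper's partial-trace step. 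Your composition $(\Delta_A\ot\Delta_B)=(\mathrm{id}_A\ot\Delta_B)\circ(\Delta_A\ot\mathrm{id}_B)$ for the second identity is also the natural way to make the paper's ``similarly'' precise.

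One caution: you verified only that the denominators come out as listed, but if you carry your computation through, the monomial $\I_A\ot(\bsE_B-\I_B)$ acquires the coefficient $\frac{1}{d_Ad_B(d_B-1)}\Tr{\bsX_{AB}\,\I_A\ot(\bsE_B-\I_B)}$ and the monomial $(\bsE_A-\I_A)\ot\I_B$ acquires $\frac{1}{d_Ad_B(d_A-1)}\Tr{\bsX_{AB}(\bsE_A-\I_A)\ot\I_B}$, whereas the printed $c_1$ and $c_2$ pair these denominators with the opposite trace arguments. A sanity check on a product input $\bsX_{AB}=\rho_A\ot\rho_B$ confirms your version; the statement as printed contains a transcription slip (the trace factors in $c_1$ and $c_2$ are interchanged), so you should state the corrected coefficients explicitly rather than asserting agreement with the listed ones.
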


\begin{proof}
The result can be also established by linearity once we show it
holds for any Hermitian matrix $\bsX_{AB}$. So it suffices to consider
the Hermitian case.
\begin{enumerate}[(i)]
\item In fact, for $\bsX_{AB}=\proj{\Psi_{AB}}$ where
$\ket{\Psi_{AB}}\in\complex^{d_A}\ot\complex^{d_B}$, we suppose its Schmidt
decomposition is
$$
\ket{\Psi_{AB}}=\sum_{i=1}^{d_A} \sqrt{\lambda_i}\ket{\bsa_i}_{A}\ot\ket{\bsb_i}_{B},
$$
with $\{\ket{\bsa_i}_{A}\}_{i=1}^{d_A}$ and $\{\ket{\bsb_j}_{B}\}_{j=1}^{d_B}$ the orthnormal bases of the subsystems $\complex^{d_A}$ and $\complex^{d_B}$ respectively,
then the density operator of $\proj{\Psi_{AB}}$ can be expressed as
$$
\proj{\Psi_{AB}}=\sum_{i,j=1}^{d_A}\sqrt{\lambda_i\lambda_j}\out{\bsa_i}{\bsa_j}_{A}\ot
\out{\bsb_i}{\bsb_j}_{B}.
$$
In the following reasoning, we omit the subindexes $A$ and $B$ when
no confusing arises. The action of the mixed-permutation channel on
the first subsystem is then
\begin{eqnarray*}
&&(\Delta_A\ot\mathrm{id}_B)
(\proj{\Psi_{AB}})\\&=&\sum_{i,j=1}^{d_A}\sqrt{\lambda_i\lambda_j}\Delta_A(\out{\bsa_i}{\bsa_j})\ot
\out{\bsb_i}{\bsb_j}\\
&=&\sum_{i,j=1}^{d_A}\sqrt{\lambda_i\lambda_j}
\Br{\Tr{\out{\bsa_i}{\bsa_j}}\frac{\I_{d_A}}{d_A}
+\Tr{\out{\bsa_i}{\bsa_j}(\bsE_A-\I_{d_A})}
\frac{(\bsE_A-\I_{d_A})}{d_A(d_A-1)}}\ot
\out{\bsb_i}{\bsb_j}\\
&=&\frac{\I_{d_A}}{d_A}\ot\Pa{
\sum_{i,j=1}^{d_A} \sqrt{\lambda_i\lambda_j}\Tr{\out{\bsa_i}{\bsa_j}}\out{\bsb_i}{\bsb_j}}\\
&&+\frac{(\bsE_A-\I_{d_A})}{d_A(d_A-1)}\ot\Pa{\sum_{i,j=1}^{d_A} \sqrt{\lambda_i\lambda_j}\Tr{\out{\bsa_i}{\bsa_j}(\bsE_A-\I_{d_A})}\out{\bsb_i}{\bsb_j}}\\
&=&\frac{\I_{d_A}}{d_A}\ot\ptr{A}{\proj{\Psi_{AB}}}+\frac{(\bsE_A-\I_{d_A})}{d_A(d_A-1)}\ot\ptr{A}{\proj{\Psi_{AB}}(\bsE_A-\I_{d_A})\ot\I_{d_B}}.
\end{eqnarray*}
By the linearity of both sides, this equality holds for any positive
semi-definite bipartite operator, and thus for a general bipartite
operator.
\item The proof goes similarly.
\end{enumerate}
We have done the proof.
\end{proof}

If we focus on the bipartite quantum states, then we get an output
state of the one-sided and two-sided mixed-permutation channels as
follows.

\begin{cor}\label{cor2}
For any quantum state $\rho_{AB}\in\density{\complex^{d_A} \ot
\complex^{d_B}}$, if the first subsystem goes through the mixed-permutation channel, then the output state is
\begin{eqnarray}\label{eq:univariate}
(\Delta_A\ot\mathrm{id}_B)
(\rho_{AB})=\frac{\I_{d_A}}{d_A}\ot\rho_B+\frac{d_A\proj{\Phi_{d_A}}-\I_{d_A}}{d_A(d_A-1)}\ot\Ptr{A}{\rho_{AB}(d_A
\proj{\Phi_{d_A}}-\I_{d_A})\ot\I_{d_B}},
\end{eqnarray}
where $\rho_B=\ptr{A}{\rho_{AB}}$. If both subsystems go through the
separate mixed-permutation channels, the output state is
\begin{eqnarray*}
(\Delta_A\ot\Delta_B)
(\rho_{AB})&=&\frac{\I_{d_A}}{d_A}\ot\frac{\I_{d_B}}{d_B}
+\gamma_1(\rho_{AB}) \frac{\I_{d_A}}{d_A}\ot
\frac{d_B\proj{\Phi_{d_B}}-\I_{d_B}}{d_B(d_B-1)}\\
&&+\gamma_2(\rho_{AB})\frac{d_A\proj{\Phi_{d_A}}-\I_{d_A}}{d_A(d_A-1)}\ot\frac{\I_B}{d_B}\\
&&+\gamma_3(\rho_{AB})\frac{d_A\proj{\Phi_{d_A}}
-\I_A}{d_A(d_A-1)}\ot\frac{d_B\proj{\Phi_{d_B}}-\I_{d_B}}{d_B(d_B-1)}.
\end{eqnarray*}
Here $\ket{\Phi_{d_A}}=\frac{1}{\sqrt{d_A}} \sum_{i=1}^{d_A}
\ket{\bsa_i}$ and $\ket{\Phi_{d_B}}=\frac{1}{\sqrt{d_B}} \sum_{j=1}^{d_B}
\ket{\bsb_j}$. Moreover,
\begin{eqnarray*}
\gamma_1(\rho_{AB})&:=&\Tr{\rho_{AB}(d_A\proj{\Phi_{d_A}}-\I_{d_A})\ot\I_{d_B}},\\
\gamma_2(\rho_{AB})&:=&\Tr{\rho_{AB}\I_{d_A}\ot(d_B\proj{\Phi_{d_B}}-\I_{d_B})},\\
\gamma_3(\rho_{AB})&:=&\Tr{\rho_{AB}(d_A\proj{\Phi_{d_A}}-\I_{d_A})\ot(d_B\proj{\Phi_{d_B}}-\I_{d_B})}.
\end{eqnarray*}
\end{cor}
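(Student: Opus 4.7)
The plan is to derive Corollary~\ref{cor2} as a direct specialization of Theorem~\ref{th:4} to a normalized quantum state $\rho_{AB}$, rewritten in terms of the maximally coherent state projectors $\proj{\Phi_{d_A}}$ and $\proj{\Phi_{d_B}}$. The bridge between the two formulations is the identity $\bsE_A = d_A\proj{\Phi_{d_A}}$ (and analogously for $B$), which follows immediately from $\bse_A=\sqrt{d_A}\,\ket{\Phi_{d_A}}$ combined with $\bsE_A=\bse_A\bse_A^\t$. Consequently $\bsE_{A(B)}-\I_{A(B)} = d_{A(B)}\proj{\Phi_{d_{A(B)}}}-\I_{A(B)}$, and once this substitution is made uniformly the two statements in the corollary should drop out of Theorem~\ref{th:4} together with the normalization $\Tr{\rho_{AB}}=1$ and $\Ptr{A}{\rho_{AB}}=\rho_B$.

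For the one-sided formula, I would set $\bsX_{AB}=\rho_{AB}$ in the first displayed equation of Theorem~\ref{th:4}. The first summand becomes $\frac{\I_A}{d_A}\ot\rho_B$ because $\Ptr{A}{\rho_{AB}}=\rho_B$. In the second summand, replacing $\bsE_A$ by $d_A\proj{\Phi_{d_A}}$ both inside the prefactor $(\bsE_A-\I_A)/(d_A(d_A-1))$ and inside the partial-trace argument yields Eq.~\eqref{eq:univariate} immediately.

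For the two-sided formula, I would substitute $\bsX_{AB}=\rho_{AB}$ into the expansion of $(\Delta_A\ot\Delta_B)(\bsX_{AB})$ supplied by Theorem~\ref{th:4} and simplify term by term. Using $\Tr{\rho_{AB}}=1$, the coefficient $c_0(\rho_{AB})=\frac{1}{d_Ad_B}$ yields the maximally mixed block $\frac{\I_A}{d_A}\ot\frac{\I_B}{d_B}$. For each of the remaining three terms, I would redistribute the factors $d_A(d_A-1)$ and $d_B(d_B-1)$ in the denominators of $c_1,c_2,c_3$ onto the two tensor legs so that each leg is written either as $\I/d$ or as $[d\proj{\Phi_d}-\I]/[d(d-1)]$; after the substitution $\bsE=d\proj{\Phi_d}$ this produces exactly the four tensor blocks $\frac{\I_A}{d_A}$, $\frac{\I_B}{d_B}$, $\frac{d_A\proj{\Phi_{d_A}}-\I_A}{d_A(d_A-1)}$, $\frac{d_B\proj{\Phi_{d_B}}-\I_B}{d_B(d_B-1)}$ appearing in the corollary, and the surviving trace numerators become precisely $\gamma_1(\rho_{AB}), \gamma_2(\rho_{AB}), \gamma_3(\rho_{AB})$.

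I do not anticipate a substantive obstacle; the derivation is bookkeeping now that Theorem~\ref{th:4} is available. The only place requiring care is the consistent splitting of the $d_A(d_A-1)$ and $d_B(d_B-1)$ denominators across the two tensor legs so that the normalization on each leg matches the standard form $[d\proj{\Phi_d}-\I]/[d(d-1)]$; once this matching is done uniformly, Corollary~\ref{cor2} follows without further computation.
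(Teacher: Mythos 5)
Your derivation is exactly the paper's (implicit) argument: Corollary~\ref{cor2} is stated without a separate proof as the specialization of Theorem~\ref{th:4} to a density operator, using $\Tr{\rho_{AB}}=1$, $\Ptr{A}{\rho_{AB}}=\rho_B$, and the identity $\bsE_{A(B)}=d_{A(B)}\proj{\Phi_{d_{A(B)}}}$, which is precisely the substitution you carry out. The redistribution of the $d_A(d_A-1)$ and $d_B(d_B-1)$ factors across the tensor legs works as you describe, and the $\gamma_i(\rho_{AB})$ are indeed the trace numerators of the $c_i(\bsX_{AB})$ from Theorem~\ref{th:4} as stated, so the proposal is correct and matches the paper's route.
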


By the form of the output states $(\Delta_A\ot\mathrm{id}_B)
(\rho_{AB})$ and $(\Delta_A\ot\Delta_B) (\rho_{AB})$ in the above
Corollary~\ref{cor2}, we get both $(\Delta_A\ot\mathrm{id}_B)
(\rho_{AB})$ and $(\Delta_A\ot\Delta_B) (\rho_{AB})$ are positive
under partial transpositions (PPT) for all states $\rho_{AB}$. Since
PPT states in qubit-qubit systems and qubit-qutrit systems are all
separable, so the action of the local mixed-permutation channel(s)
on two-qubit systems will completely erase the entanglement between
two subsystems.
\begin{exam}
For any quantum state $\rho_{AB}$ in $\density{\complex^{2}\ot\complex^{2}}$,
$$
\rho_{AB}=\Pa{\begin{array}{cccc}
             \rho_{11} & \rho_{12} & \rho_{13} & \rho_{14} \\
              \rho_{21} & \rho_{22} & \rho_{23} & \rho_{24} \\
               \rho_{31} & \rho_{32} & \rho_{33} & \rho_{34} \\
             \rho_{41} & \rho_{42} & \rho_{43} & \rho_{44}
           \end{array}
},
$$
if the
first qubit goes through the mixed-permutation channel, then the
output state is
\begin{eqnarray}
(\rho_{AB})_{\star,I} = (\Delta_A\ot\mathrm{id}_B)
(\rho_{AB})=\frac{\I_{d_A}}{d_A}\ot\rho_B+\frac{\bsE_A-\I_{d_A}}{d_A(d_A-1)}\ot\Ptr{A}{\rho_{AB}(\bsE_A-\I_{d_A})\ot\I_{d_B}},
\end{eqnarray}
with
$$
\rho_B=\ptr{A}{\rho_{AB}}=\Pa{\begin{array}{cc}
             \rho_{11}+\rho_{22} & \rho_{13}+\rho_{24} \\
             \rho_{31}+\rho_{42} & \rho_{33}+\rho_{44}
           \end{array}
}.
$$
More explicitly,
\begin{eqnarray}
(\rho_{AB})_{\star,I} = \Pa{
\begin{array}{cccc}
 \rho _{11}+\rho _{22} & \rho _{12}+\rho _{21} & \rho _{13}+\rho _{24} & \rho _{14}+\rho _{23} \\
 \rho _{12}+\rho _{21} & \rho _{11}+\rho _{22} & \rho _{14}+\rho _{23} & \rho _{13}+\rho _{24} \\
 \rho _{31}+\rho _{42} & \rho _{32}+\rho _{41} & \rho _{33}+\rho _{44} & \rho _{34}+\rho _{43} \\
 \rho _{32}+\rho _{41} & \rho _{31}+\rho _{42} & \rho _{34}+\rho _{43} & \rho _{33}+\rho _{44} \\
\end{array}
}
\end{eqnarray}
with $(\rho_{AB})^{\t_A}_{\star,I}=(\rho_{AB})_{\star,I}$.
Furthermore, if both two qubits  go through the mixed-permutation
channel, then the output state is
\begin{eqnarray*}
(\rho_{AB})_{\star,\star}&=&(\Delta_A\ot\Delta_B)(\rho_{AB})\\
&=&c_0(\rho)\I_{d_Ad_B}+c_1(\rho)\I_{d_A}\ot(\bsE_{B}-\I_{d_B})\\
&&+c_2(\rho)(\bsE_{A}-\I_{d_A})\ot\I_{d_B}+c_3(\rho)(\bsE_{A}-\I_{d_A})\ot(\bsE_{B}-\I_{d_B})\\
&=&\Pa{
\begin{array}{cccc}
 c_0(\rho_{AB}) & c_1(\rho_{AB}) & c_2(\rho_{AB}) & c_3(\rho_{AB}) \\
 c_1(\rho_{AB}) & c_0(\rho_{AB}) & c_3(\rho_{AB}) & c_2(\rho_{AB}) \\
 c_2(\rho_{AB}) & c_3(\rho_{AB}) & c_0(\rho_{AB}) & c_1(\rho_{AB}) \\
 c_3(\rho_{AB}) & c_2(\rho_{AB}) & c_1(\rho_{AB}) & c_0(\rho_{AB}) \\
\end{array}} = (\rho_{AB})^{\t_A}_{\star,\star}\geqslant0,
\end{eqnarray*}
where
\begin{eqnarray*}
c_0(\rho_{AB})&=&\frac14,\\
c_1(\rho_{AB})&=&\frac14\Tr{\rho_A(\bsE_{B}-\I_{d_B})},\\
c_2(\rho_{AB})&=&\frac14\Tr{\rho_B(\bsE_{A}-\I_{d_A})},\\
c_3(\rho_{AB})&=&\frac14\Tr{\rho_{AB}(\bsE_{A}-\I_{d_A})\ot(\bsE_{B}-\I_{d_B})},
\end{eqnarray*}
implying that $(\rho_{AB})_{\star,\star}$ must be separable. Moreover, the
eigenvalues are given by
\begin{eqnarray*}
c_0(\rho_{AB})+c_1(\rho_{AB})\pm(c_2(\rho_{AB})+c_3(\rho_{AB})),\quad
c_0(\rho_{AB})-c_1(\rho_{AB})\pm(c_2(\rho_{AB})-c_3(\rho_{AB})).
\end{eqnarray*}
From the above, we see that $(\rho_{AB})_{\star,I}$ and
$(\rho_{AB})_{\star,\star}$ are PPT states. Thus by Peres-Horodecki
criterion \cite{PPT1,PPT2}, the output states
$(\rho_{AB})_{\star,I}$ and $(\rho_{AB})_{\star,\star}$ are
separable ones for all two-qubit states $\rho_{AB}$.
\end{exam}

In what follows, let us recall the notion of the
entanglement-breaking channel \cite{Horodecki2003}. The so-called
entanglement-breaking channel $\cT\equiv\cT_A$ means that $(\cT_A\ot
\mathrm{id}_B)(\rho_{AB})$ is separable for every bipartite state
$\rho_{AB}\in\density{\complex^{d_A}\ot\complex^{d_B}}$. As already
known in \cite{Ruskai2003}, in qubit systems, any quantum channel is
entanglement breaking if and only if its Choi representation is
separable. Because $(\Delta_A\ot\mathrm{id}_B) (\rho_{AB})$ is
separable for any two-qubit state $\rho_{AB}$, a fortiori for
maximally entangled two-qubit state. Thus $\Delta$ is entanglement
breaking in qubit systems. In fact, the mixed-permutation channel
$\Delta$ is entanglement-breaking channel in qudit systems, which
can be summarized into the following result:
\begin{cor}
The mixed-permutation channel $\Delta$ characterized in
Theorem~\ref{th:1} is entanglement-breaking channel in qudit
systems.
\end{cor}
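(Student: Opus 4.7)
The plan is to establish this corollary by producing a measure-and-prepare representation of $\Delta$ and invoking the Horodecki--Shor--Ruskai characterization \cite{Horodecki2003,Ruskai2003}: a channel $\cT$ is entanglement-breaking if and only if $\cT(\rho)=\sum_k\Tr{M_k\rho}\sigma_k$ for some POVM $\{M_k\}$ and density operators $\sigma_k$. Any such representation automatically renders $(\cT\otimes\mathrm{id}_B)(\rho_{AB})$ manifestly separable for every bipartite input, which is exactly the definition of entanglement-breaking used in Section~\ref{sect:3}.

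First I would reduce the image of $\Delta$ to a one-parameter family. By Corollary~\ref{cor1} and Eq.~\eqref{mmcs}, every output has the form $\Delta(\rho)=(1-\mathfrak{p})\I_d/d+\mathfrak{p}\proj{\Phi_d}$ with $\mathfrak{p}=(d-1)^{-1}(dq-1)$ and $q:=\Tr{\rho\proj{\Phi_d}}=\Innerm{\bse}{\rho}{\bse}/d\in[0,1]$. Since $(1-\mathfrak{p})/d=(1-q)/(d-1)$, a one-line rearrangement recasts $\Delta(\rho)$ as the convex combination
\begin{eqnarray*}
\Delta(\rho) = q\proj{\Phi_d} + (1-q)\,\frac{\I_d-\proj{\Phi_d}}{d-1}.
\end{eqnarray*}

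Next I would read off the measure-and-prepare form: take the two-outcome POVM $\{M_0,M_1\}=\{\proj{\Phi_d},\,\I_d-\proj{\Phi_d}\}$ (obviously positive with $M_0+M_1=\I_d$) and preparation states $\sigma_0=\proj{\Phi_d}$, $\sigma_1=(\I_d-\proj{\Phi_d})/(d-1)$ (the latter being the maximally mixed state on the orthocomplement of $\ket{\Phi_d}$). Then $q=\Tr{M_0\rho}$ and $1-q=\Tr{M_1\rho}$, so the display above reads $\Delta(\rho)=\Tr{M_0\rho}\sigma_0+\Tr{M_1\rho}\sigma_1$ verbatim, and the cited characterization delivers the conclusion in every dimension $d$.

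There is no serious obstacle: Theorem~\ref{th:1} has already collapsed the image of $\Delta$ to a one-parameter family governed by the single linear functional $\rho\mapsto\Tr{\rho\proj{\Phi_d}}$, so once this is recognised the measure-and-prepare form is essentially forced and only elementary algebra remains. An equivalent route would verify separability of the Choi state $J(\Delta)=\proj{\Phi_d}\otimes\proj{\Phi_d}+(d-1)^{-1}(\I_d-\proj{\Phi_d})\otimes(\I_d-\proj{\Phi_d})$, each summand being a manifestly positive sum of product operators; the two perspectives are equivalent and either suffices.
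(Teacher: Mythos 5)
Your proof is correct, and it reaches the conclusion by a route that is genuinely different in presentation from the paper's, although both ultimately rest on the same equivalence in \cite[Theorem~4]{Horodecki2003}. The paper verifies condition (C) of that theorem: it computes the Choi state $(\Delta\ot\mathrm{id})(\proj{\Omega})$ via the bipartite formula of Corollary~\ref{cor2} and exhibits it as the separable state $\frac1d\proj{\Phi_d}^{\ot2}+\bigl(1-\frac1d\bigr)\bigl(\frac{\I_d-\proj{\Phi_d}}{d-1}\bigr)^{\ot2}$. You instead verify condition (A), the Holevo measure-and-prepare form, working entirely in the single-system picture: your algebra is right, since $\mathfrak{p}=(dq-1)/(d-1)$ with $q=\Tr{\rho\proj{\Phi_d}}$ gives $(1-\mathfrak{p})/d=(1-q)/(d-1)$ and hence $\Delta(\rho)=q\proj{\Phi_d}+(1-q)\frac{\I_d-\proj{\Phi_d}}{d-1}$, which is exactly $\Tr{M_0\rho}\sigma_0+\Tr{M_1\rho}\sigma_1$ for the stated two-outcome POVM. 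What your approach buys is that it needs only Theorem~\ref{th:1}/Corollary~\ref{cor1} and no bipartite machinery, and it gives an operational implementation of $\Delta$ (measure $\{\proj{\Phi_d},\I_d-\proj{\Phi_d}\}$, then reprepare); what the paper's approach buys is that it reuses the already-derived formula for $(\Delta_A\ot\mathrm{id}_B)$ and displays the Choi state explicitly. One tiny quibble: the alternative Choi expression you quote at the end, $\proj{\Phi_d}^{\ot2}+(d-1)^{-1}(\I_d-\proj{\Phi_d})^{\ot2}$, has trace $d$, i.e.\ it is the unnormalized Choi matrix; this does not affect separability, but it differs by a factor $d$ from the paper's normalized $\jam{\Delta}$.
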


\begin{proof}
According to \cite[Theorem~4,
(B)$\Longleftrightarrow$(C)]{Horodecki2003}, it suffices to show
that the Choi-representation of the mixed-permutation channel $\Delta$,
$\jam{\Delta}:=(\Delta\ot\mathrm{id})(\proj{\Omega})$, is separable,
where $\ket{\Omega}:=\frac1{\sqrt{d}}\sum^d_{i=1}\ket{ii}$. Indeed,
using Eq.~\eqref{eq:univariate},
\begin{eqnarray*}
\jam{\Delta} &=& \frac1{d^2}\I_d\ot\I_d +
\frac1{d^2(d-1)}(d\proj{\Phi_d}-\I_d)^{\ot2}\\
&=& \frac1d\proj{\Phi_d}\otimes \proj{\Phi_d} +
\Pa{1-\frac1d}\Pa{\frac{\I_d-\proj{\Phi_d}}{d-1}} \otimes \Pa{\frac{\I_d-\proj{\Phi_d}}{d-1}},
\end{eqnarray*}
a separable state. Here $\ket{\Phi_d}=\frac{1}{\sqrt{d}}
\sum^d_{i=1} \ket{i}$. This indicates that $\Delta$ is
entanglement-breaking.
\end{proof}

\begin{exam}[The family of Bell-diagonal states]
The Bell-diagonal states \cite{Horodecki1996} in two-qubit system
can be written as
\begin{eqnarray}\label{rhobell}
\rho_{\text{Bell}}=\frac14(\I_4+\sum_{i=1}^3 t_i
\sigma_i\otimes\sigma_i)
\end{eqnarray}
with $\sigma_i$ three Pauli operators in Eq.~\eqref{eq:Pauli}. So a
Bell-diagonal state is specified by three real variables $t_1, t_2$,
and $t_3$ such that
\begin{eqnarray*}
\begin{cases}
1-t_1-t_2-t_3\geqslant 0,\\
1-t_1+t_2+t_3\geqslant 0,\\
1+t_1-t_2+t_3\geqslant 0,\\
1+t_1+t_2-t_3\geqslant 0.
\end{cases}
\end{eqnarray*}
Denote the set of all above such tuples $(t_1,t_2,t_3)$ by
$D_{\text{Bell}}$. Because the four eigenvalues of $\rho$ are in
$[0,1]$, we see that $t_i\in[-1,1](i=1,2,3)$. That is,
$D_{\text{Bell}}\subset[-1,1]^3$. The Bell-diagonal states can be
geometrically described by a tetrahedron. One can show that a
Bell-diagonal state is separable if and only if
$\abs{t_1}+\abs{t_2}+\abs{t_3}\leqslant 1$ holds. Geometrically, the
set of Bell-diagonal states is a tetrahedron and the set of
separable Bell-diagonal states is an octahedron
\cite{Horodecki1996}, which is denoted by $D_{\text{Bellsep}}$. That
is, $D_{\text{Bellsep}}=\set{(t_1,t_2,t_3)\in D_{\text{Bell}}:
\sum^3_{i=1}\abs{t_i}\leqslant 1}$. We find that
\begin{eqnarray}\label{eq:one-sided}
\Delta\ot\mathrm{id}(\rho_{\mathrm{Bell}}) =
\frac14(\I_4+t_1\sigma_1\ot\sigma_1).
\end{eqnarray}
We make a plot in arguments $(t_1,t_2,t_3)$ about the image states
in \eqref{eq:one-sided} of the one-sided action of the
mixed-permutation channel on the family \eqref{rhobell} of Bell
diagonal states in the following Figure~\ref{fig:action2delta}.
\begin{figure}[h!]
\centering
\begin{minipage}[t]{0.5\textwidth}
\includegraphics[width=1\textwidth]{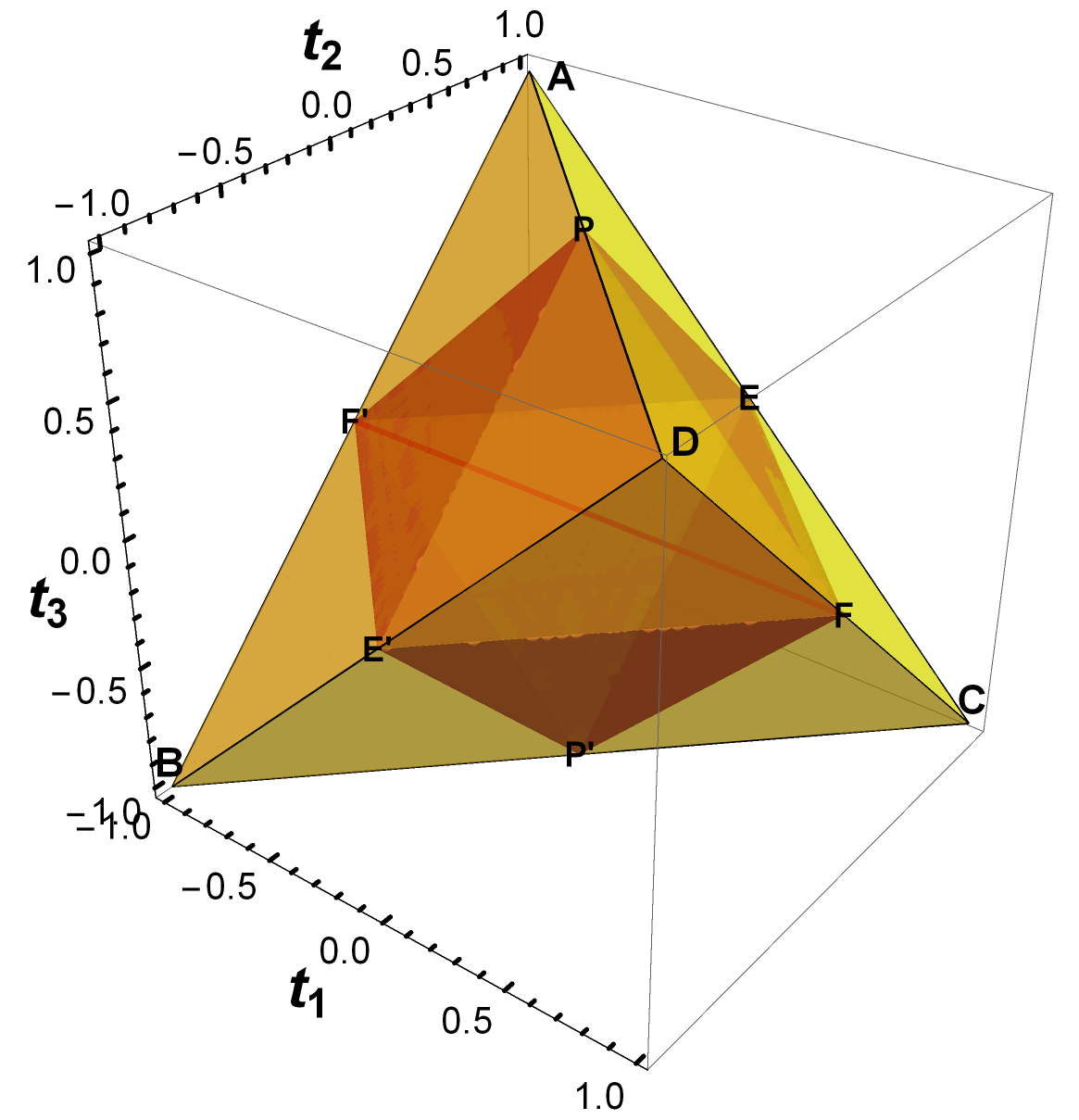}
\end{minipage}
\caption{(Color Online) The yellow tetrahedron stands for the family
of Bell-diagonal states in Eq. \eqref{rhobell}; the orange
octahedron stands for separable Bell-diagonal states; the red line
segment stands for image states
$\Delta\ot\mathrm{id}(\rho_{\mathrm{Bell}})$ in Eq.
\eqref{eq:one-sided} of all diagonal Bell-diagonal states through
the one-sided mixed-permutation channel.}
\end{figure}\label{fig:action2delta}
\end{exam}

\section{Conclusions and discussions} \label{conclusion}

In summary, we studied one kind of special channels, i.e., the
mixed-permutation  channels. The properties of this channel was
characterized. The mixed-permutation channel can be used in bounding
quantum coherence with respect to all coherence measures. The action
of the mixed-permutation channel on bipartite systems is also
discussed and can be generalized to multipartite systems. Beyond
that, there are some interesting problems which can be further
considered in the future:
\begin{itemize}
\item The first question is what is the average collective action of
the permutations on the bipartite quantum states,
\begin{eqnarray*}
\frac1{d!}\sum_{\pi\in S_d} (\bsP_\pi\ot\bsP_\pi)
\bsX(\bsP_\pi\ot\bsP_\pi)^\dagger=?.
\end{eqnarray*}
We are expecting a closed-form formula of the above expression. With
this potential formula, we can extend symmetrized variance for pure
states in \cite{ZZ2022} to mixed states.
\item The second question
is what is the correlation quantifiers \cite{N.Li} or coherence
measure \cite{Y.Sun} induced by the mixed-permutation channel.
\end{itemize}
In light of the special properties and the closed form of the output
state of the mixed-permutation channel, we expect the corresponding
correlation measure and coherence measure could characterize the
quantum features of quantum states from a permutation-invariant way.

\subsubsection*{Acknowledgments}
This research is supported by Zhejiang Provincial Natural Science
Foundation of China under Grant No. LZ23A010005 and by NSFC under
Grant Nos.11971140 and 12171044.

\subsubsection*{Data Availability Statement}
No Data associated in the manuscript.

\end{document}